\newcolumntype{L}{>{\raggedright\arraybackslash}X}
\newcommand{\argmin}{\mathop{\rm arg~min}\limits}
\newcommand{\st}{{\rm{s.t.}}}
\newcommand{\prox}{{\rm{prox}}}
\newtheorem{theorem}{Proposition}
\begin{document}

\title{Graph Blind Deconvolution with Sparseness Constraint}

\author{Kazuma Iwata, Koki Yamada, \IEEEmembership{Student Member,~IEEE}, and Yuichi Tanaka, \IEEEmembership{Senior Member,~IEEE}
\thanks{K. Iwata and K. Yamada are with the Graduate School of BASE, Tokyo University of Agriculture and Technology, Tokyo 184-8588, Japan (e-mail: k\_iwata@msp-lab.org, k-yamada@msp-lab.org).}
% \thanks{K. Iwata is with the Graduate School of BASE, Tokyo University of Agriculture and Technology, Tokyo 184-8588, Japan (e-mail:k\_iwata@msp-lab.org).}
% \thanks{K. Yamada is with the Graduate School of BASE, Tokyo University of Agriculture and Technology, Tokyo 184-8588, Japan (e-mail:k-yamada@msp-lab.org).}
\thanks{Y. Tanaka is with the Graduate School of BASE, Tokyo University of Agriculture and Technology, Tokyo 184-8588, Japan, and also with the PRESTO, Japan Science and Technology Agency, Kawaguchi 332-0012, Japan (e-mail: ytnk@cc.tuat.ac.jp).}
\thanks{This work was supported in part by JST CREST under grant JPMJCR1784 and JST PRESTO under grant JPMJPR1935.}}

\markboth{}
{}
\maketitle

\begin{abstract}
	We propose a blind deconvolution method for signals on graphs, with the exact sparseness constraint for the original signal. Graph blind deconvolution is an algorithm for estimating the original signal on a graph from a set of blurred and noisy measurements. Imposing a constraint on the number of nonzero elements is desirable for many different applications. This paper deals with the problem with constraints placed on the exact number of original sources, which is given by an optimization problem with an $\ell_0$ norm constraint. We solve this non-convex optimization problem using the ADMM iterative solver. Numerical experiments using synthetic signals demonstrate the effectiveness of the proposed method.
\end{abstract}

\begin{IEEEkeywords}
	Graph signal processing, network diffusion, non-convex optimization, sparse constraint
\end{IEEEkeywords}

\IEEEpeerreviewmaketitle

\section{Introduction}
	\IEEEPARstart{S}{ignals} diffused on a network often have very few original sources. For example, rumors on social networks and spike waves on brain networks begin spreading from very few active sources. An estimation of the source positions on the networks from observed signals, called graph signal deconvolution, is an important task in graph signal processing (GSP) \cite{shuman+,Ortega2018}. GSP is an extension of classical signal processing theory to signals on graphs \cite{shuman+,Ortega2018,Cheung2018,tanaka2020sampling,HAMMOND2011129,Sakiya2014a, Sakiya2019, Sakiya2019a, 6409473,6808520,7926424, Narang2012, Narang2013, Onuki2016, Anis2016, Tanaka2018, Tanaka2020b}. The techniques of graph signal deconvolution are based on classical blind deconvolution algorithms for images \cite{489268,6680763}. Blind deconvolution is a method for restoring an original signal from blurred measurement(s) without knowledge of the information of the spreading, that is, filters. \par
	Extensions of blind deconvolution to the graph domain have been studied in \cite{blind_identification,7952928}. The target signal in these studies was modeled as a signal diffused by a graph filter. Graph filters are a special class of linear operators whose input and output are graph signals. Mathematically, graph filters are defined as a linear transformation that can be expressed as a polynomial of the graph variation operator\cite{shuman+, Ortega2018, 6808520}. Graph blind deconvolution simultaneously estimates the coefficients of the graph filter and the original signal. The number of original sources is expressed using the $\ell_0$ constraint. However, because the $\ell_0$ pseudo-norm is a non-convex function, it is generally difficult to use for optimization. In \cite{blind_identification}, the $\ell_1$-norm constraint, a convex relaxation of the $\ell_0$ pseudo-norm, is used instead of the $\ell_0$ constraint. For this reason, it is not possible to strictly limit the number of signal sources in the restored signal. \par
	In this letter, we consider a graph blind deconvolution problem that estimates an original signal having only a small number of nonzero elements from noisy signals diffused on a graph. In particular, we assume that the number of signal sources $S$ is given a priori. We formulate a non-convex optimization problem with an $S$-sparse constraint and solve it using ADMM \cite{l0_grad}. Non-convex optimization problems often converge to the local minima; however, the iterative ADMM solver works well for such a non-convex optimization with an appropriate initial value \cite{MAL-016}. Our proposed method shares a similar proposal with graph blind deconvolution \cite{blind_identification}; however, the previous study did not specify the number of signal sources of the original signal. Finally, we provide an illustrative experiment conducted on a synthetic dataset and compare the results with those of the conventional method. The results show that our constraint effectively estimates the signal sources, even under noisy situations.
\section{Problem formulation} % (fold)
\label{sec:rerated_work}
	Let $\mathcal{G}=(\mathcal{V},\mathcal{E},\bm{W})$ denote an undirected graph, where $\mathcal{V}$ and $\mathcal{E}$ represent sets of nodes and edges, respectively. An $N\times N$ matrix $\bm{W}$ contains edge weights, with $w_{ij} = w_{ji}$ denoting a positive weight of an edge connecting nodes $i$ and $j$, and $w_{ij} = 0$ if there is no edge. A graph signal defined on $\mathcal{V}$ can be represented as a vector $\bm{x} = [x_0,\dots,x_{N-1}]^T$, where $x_i$ represents the signal value at node $i$. A graph variation operator $\bm{S}$ is a matrix derived from $\bm{W}$. Its examples are graph Laplacian or adjacency matrix. Assuming that $\bm{S}$ is diagonalizable, the graph variation operator can be decomposed into $\bm{S}=\bm{V}\bm{\Lambda}\bm{V}^{-1}$, where $\bm{\Lambda}\in\mathbb{R}^{N\times N}$ is a diagonal matrix. Based on the graph variation operator $\bm{S}$, a linear graph filter is given by
	\begin{align}
		\bm{H}:=\sum_{l=0}^{L-1}h_l\bm{S}^l,
		\label{Eq:graph_filter}
	\end{align}
	where $\bm{h} = [h_0,\dots,h_{L-1}]^T$ represents the filter coefficients. Using the spectral decomposition of $\bm{S}$, the graph filter and signal can be represented in the graph frequency domain. The filtering operation is given by $\bm{y} = \bm{Hx}$, where $\bm{y}$ is the filtered signal and $\bm{x}$ is the original signal.
	\par
	Graph filters and signals can be represented in the frequency domain. Let us define the matrices $\bm{U} = \bm{V}^{-1}\in\mathbb{R}^{N\times N}$ and $\bm{\Psi}\in\mathbb{R}^{N\times L}$, where $\Psi_{ij} = (\Lambda_{ii})^{j-1}$. Using them, the frequency representation of the signal $\bm{x}$ and filter $\bm{h}$ is defined as $\hat{\bm{x}} = \bm{Ux}$ and $\hat{\bm{h}} = \bm{\Psi h}$, respectively. Therefore, given a measurement $\bm{y}\in\mathbb{R}^N$, we can obtain its frequency-domain representation by $\hat{\bm{y}} = \bm{Uy}={\rm{diag}}(\bm{\Psi h})\bm{Ux}$. \par
	Suppose that the number of active sources in the original signal is equal to or less than $S$. The graph blind deconvolution is formulated as the following problem: 
	\begin{align}
		{\rm{find}}\ \ \{\bm{h},\bm{x}\}\ \ \st\ \ \hat{\bm{y}}={\rm{diag}}(\bm{\Psi h})\bm{Ux},\  \|\bm{x}\|_0\leq S.
		\label{Eq:Graph_blind_deconvolution_1}
	\end{align}
	The first constraint in (\ref{Eq:Graph_blind_deconvolution_1}) can be rewritten as $\hat{\bm{y}} = \left(\bm{\Psi}^T\odot\bm{U}^T\right)^T{\rm{vec}}(\bm{xh}^T)$, where $\odot$ denotes the Khatri-Rao product and ${\rm{vec}}(\cdot)$ is the vectorization operator. Let us define $\bm{Z}: = \bm{xh}^T$ and $\bm{M}: = \left(\bm{\Psi}^T\odot\bm{U}^T\right)^T$. The following problem may then be considered from (\ref{Eq:Graph_blind_deconvolution_1}):
	\begin{align}
		\min_{\bm{Z}}\ \ {\rm{rank}}(\bm{Z})\ \ \st\ \ \hat{\bm{y}}=\bm{M}{\rm{vec}}(\bm{Z}),\ \|\bm{Z}\|_{2,0}\leq S,
		\label{Eq:Graph_blind_deconvolution_2}
	\end{align}
	where $\|\bm{Z}\|_{2,0}$ is equal to the number of nonzero rows of $\bm{Z}$.
	The rank and $\ell_{2,0}$ pseudo-norm minimization are generally combinatorial and NP-hard. In \cite{blind_identification}, to make (\ref{Eq:Graph_blind_deconvolution_2}) tractable, the nuclear norm $\|\bm{Z}\|_*$ is utilized as a convex relaxation of the rank function. Similarly, the $\ell_{2,1}$ mixed norm $\|\bm{Z}\|_{2,1}$ is the closest convex relaxation of $\|\bm{Z}\|_{2,0}$ \cite{1603770}. As a result, in the existing method \cite{blind_identification}, problem (\ref{Eq:Graph_blind_deconvolution_2}) is transformed into the following convex optimization problem:
	\begin{align}
		\min_{\bm{Z}}\ \ \|\bm{Z}\|_*+\tau\|\bm{Z}\|_{2,1}\ \ \st\ \ \hat{\bm{y}}=\bm{M}{\rm{vec}}(\bm{Z}).
		\label{Eq:Graph_blind_deconvolution_3}
	\end{align}\par
	The accuracy of the estimation can be improved by using multiple measurements. 
	We consider $P$ measurements $\{\bm{y}_p\}_{p=1}^P$, where each different sparse input is diffused by the common filter $\bm{H}$. Multiple signals are then treated as a vector of stacked measurements $\tilde{\bm{y}}=[\hat{\bm{y}}_1^T,\dots,\hat{\bm{y}}_P^T]^T\in\mathbb{R}^{NP}$, and similarly for the unobserved inputs $\tilde{\bm{x}}=[\bm{x}_1^T,\dots,\bm{x}_P^T]^T\in\mathbb{R}^{NP}$.
	In addition, the matrices are $\bm{Z}_p=\bm{x}_p\bm{h}^T, p=1,\dots,P$, and let the vertical and horizontal matrices be $\tilde{\bm{Z}}_v=[\bm{Z}_1^T,\dots,\bm{Z}_P^T]^T\in\mathbb{R}^{NP\times L}$ and $\tilde{\bm{Z}}_h=[\bm{Z}_1,\dots,\bm{Z}_P]\in\mathbb{R}^{N\times PL}$, respectively. The formulation using multiple measurements is then given as follows: 
	\begin{align}
		\begin{split}
			\min_{\{\bm{Z}_p\}_{p=1}^P}\ \ \|\tilde{\bm{Z}}_v\|_*+\tau\|\tilde{\bm{Z}}_h\|_{2,1}\ \ \st\ \ \tilde{\bm{y}}=(\bm{I}_P\otimes\bm{M}){\rm{vec}}(\tilde{\bm{Z}}_h).
		\end{split}
		\label{Eq:multiple_input}
	\end{align}
% section rerated_work (end)
\section{$S$-sparse constraint} % (fold)
\label{sec:S_sparse_constraint}
	\subsection{Formulation of Sparseness Constraint} % (fold)
	\label{sub:}
		For simplicity, we consider the single-input case in (\ref{Eq:Graph_blind_deconvolution_3}). However, the method introduced in this section can be easily extended to the multiple-input case in (\ref{Eq:multiple_input}).\par
		In fact, problem (\ref{Eq:Graph_blind_deconvolution_3}) cannot strictly limit the number of signal sources, although it is a convex optimization problem. Instead, we consider the following problem to constrain the exact sparseness of $\bm{Z}$.
		\begin{align}
			\min_{\bm{Z}}\ {\rm{rank}}(\bm{Z})\ \st \ \|\hat{\bm{y}}-\bm{M}{\rm{vec}}(\bm{Z})\|_{2}<\epsilon, \|\bm{Z}\|_{2,0}\leq S,
			\label{Eq:l0_blind_deconvolution}
		\end{align}
		where $\epsilon>0$. Here, let us define the indicator function of the inequality constraint on the $\ell_{2,0}$ mixed pseudo-norm in (\ref{Eq:l0_blind_deconvolution}) as
		\begin{align}
			\mathcal{I}_{\|\cdot\|_{2,0}\leq S}(\bm{X})=\begin{cases}
				0 & {\rm{if}}\ \|\bm{X}\|_{2,0}\leq S,\\
				\infty & {\rm{otherwise}}.
			\end{cases}
			\label{Eq:l0_indicator}
		\end{align}
		The graph blind deconvolution with the $S$-sparse constraint in (\ref{Eq:l0_blind_deconvolution}) can then be reformulated as follows: 
		\begin{align}
			\min_{\bm{Z}}\ \|\bm{Z}\|_*+\mathcal{I}_{\|\cdot\|_{2,0}\leq S}(\bm{Z})\ \ \st \ \|\hat{\bm{y}}-\bm{M}{\rm{vec}}(\bm{Z})\|_{2}<\epsilon,
			\label{Eq:l0_blind_deconvolution2}
		\end{align}
		where we use the relaxation of the rank function as in (\ref{Eq:Graph_blind_deconvolution_3}). Further, we modify (\ref{Eq:l0_blind_deconvolution2}) by introducing local variables $\bm{Z}_1,\bm{Z}_2\in\mathbb{R}^{N\times L}$ such that the problem can be applied to an iterative solver based on the ADMM:
		\begin{align}
			\begin{split}
				&\min_{\bm{Z}}\ \|\bm{Z}_1\|_*+\mathcal{I}_{\|\cdot\|_{2,0}\leq S}(\bm{Z}_2)+\mathcal{I}_{\mathcal{D}}(\bm{W})\\
				&\ \st\ \ \bm{Z}_i=\bm{W},\ i=1,2,
			\end{split}
			\label{Eq:l0_admm}
		\end{align}
		where $\mathcal{I}_{\mathcal{D}}(\cdot)$ is an indicator function for $\mathcal{D}=\{\bm{Z}\in\mathbb{R}^{N\times L}\ |\ \|\hat{\bm{y}}-\bm{M}{\rm{vec}}(\bm{Z})\|_{2}<\epsilon\}$. The constraint in (\ref{Eq:l0_admm}) ensures that all local variables are identical to $\bm{W}$. We solve (\ref{Eq:l0_admm}) with the following iterations:
		\begin{align}
			\begin{split}
				\bm{W}^{(n+1)}=&\argmin_{\bm{W}} \ \ \mathcal{I}_{\mathcal{D}}(\bm{W})\\
				&\ \ \ \ \ \ \ \ \ \ \ +\rho\|\bm{W}-\frac{1}{2}\sum_{k=1}^2(\bm{Z}_k^{(n)}-\bm{Y}_k^{(n)})\|_F,
			\end{split}\label{Eq:admm_step1}\\
			\bm{Z}_1^{(n+1)}=&\argmin_{\bm{Z}}\ \ \|\bm{Z}\|_*+\frac{\rho}{2}\|\bm{Z}-(\bm{W}^{(n+1)}+\bm{Y}_1^{(n)})\|_F,\\
			\begin{split}
				\bm{Z}_2^{(n+1)}=&\argmin_{\bm{Z}}\ \ \mathcal{I}_{\|\cdot\|_{2,0}\leq S}(\bm{Z})\\
				&\ \ \ \ \ \ \ \ \ \ \ +\frac{\rho}{2}\|\bm{Z}-(\bm{W}^{(n+1)}+\bm{Y}_2^{(n)})\|_F,
			\end{split}\label{Eq:l0_admm_2}\\
			\bm{Y}_k^{(n+1)}=&\bm{Y}_k^{(n)}+\bm{W}^{(n+1)}-\bm{Z}_k^{(n+1)}\ k=1,2.
			\label{Eq:admm_step3}
		\end{align}
		The indicator function in (\ref{Eq:l0_admm_2}) is non-convex because a set satisfying the $S$-sparse constraint is a non-convex set. Therefore, the optimization problem becomes non-convex. Although ADMM is a method for solving a class of convex optimization problems, it has been validated as effective for non-convex optimization problems in practice \cite{l0_grad}, \cite{BERRY2007155}.
	% subsection subsection_name (end)
	\subsection{Optimization with $S$-sparse Constraint} % (fold)
	\label{sub:projection_onto_mixed_l_}
		The following is equivalent to (\ref{Eq:l0_admm_2}):
		\begin{align}
			{\rm{find}}\ \ \bm{Z}^*\ \in\argmin_{\bm{Z}\in\{\bm{Z}|\|\bm{Z}\|_{2,0}\leq S\}}\|\bm{Z}-\overline{\bm{Z}}\|_F,
			\label{Eq:proj_l20}
		\end{align}
		where $\overline{\bm{Z}}=\bm{W}^{(n+1)}+\bm{Y}_2^{(n)}$. This is the projection onto a set satisfying the $S$-sparse constraint. In other words, minimization can be performed by calculating the projection onto the $\ell_{2,0}$ mixed pseudo-norm ball. Projection (\ref{Eq:proj_l20}) might appear to be difficult; however, its optimal solution can be computed in a closed form, which is given by the following result:
		\begin{theorem} 
			Let $\overline{\bm{Z}}=[\overline{\bm{z}}_1,\dots,\overline{\bm{z}}_N]^T$, i.e., $\overline{\bm{z}}_1,\dots,\overline{\bm{z}}_N$ are the rows of $\overline{\bm{Z}}$ in (\ref{Eq:proj_l20}). In addition, let $\overline{\bm{z}}_{(1)},\dots,\overline{\bm{z}}_{(N)}$ be the vectors $\overline{\bm{z}}_1,\dots,\overline{\bm{z}}_N$ sorted in descending order in terms of their $\ell_2$ norms, that is, $\|\overline{\bm{z}}_{(1)}\|_2\geq\|\overline{\bm{z}}_{(2)}\|_2\geq\cdots\geq\|\overline{\bm{z}}_{(N)}\|_2$. The index $\cdot_{(k)}$ corresponds to the index of the $k$-th largest row in terms of their norm. The projection {(\ref{Eq:proj_l20})} can be written as follows: 
			\begin{align}
				{\rm{find}}\ \ \bm{Z}^*\ \in\argmin_{\bm{Z}\in\mathbb{R}^{N\times L}}\ \ \|\bm{Z}-\overline{\bm{Z}}\|_F^2\ \ \st\ \ \|\bm{Z}\|_{2,0}\leq S.
				\label{Eq:nonconvex_optimization2}
			\end{align}
			One minima of (\ref{Eq:nonconvex_optimization2}) is given by
			\begin{align}
				\bm{Z}^*=\begin{cases}
					\overline{\bm{Z}} & {\rm{if}}\ \|\overline{\bm{Z}}\|_{2,0}\leq S,\\
					[\tilde{\bm{z}}_1^T,\dots,\tilde{\bm{z}}_N^T]^T & {\rm{if}}\ \|\overline{\bm{Z}}\|_{2,0}> S,
				\end{cases}
			\end{align}
			where
			\begin{align}
				\tilde{\bm{z}}_i=\begin{cases}
					\overline{\bm{z}}_{(i)} & {\rm{if}}\ i\in\{1,\dots,S\},\\
					\bm{0} & {\rm{if}}\ i\in\{S+1,\dots,N\}.
				\end{cases}
			\end{align}
			Thus, the projection onto the $\ell_{2,0}$ mixed pseudo-norm ball is equivalent to preserving the top $S$ rows of $\overline{\bm{Z}}$ according to their $\ell_2$ norms.
		\end{theorem}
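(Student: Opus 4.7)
The plan is to exploit the fact that both the objective and the constraint in (\ref{Eq:nonconvex_optimization2}) are separable across the rows of $\bm{Z}$. Writing $\bm{Z}=[\bm{z}_1,\dots,\bm{z}_N]^T$, the Frobenius distance decomposes as
$$\|\bm{Z}-\overline{\bm{Z}}\|_F^2=\sum_{i=1}^N\|\bm{z}_i-\overline{\bm{z}}_i\|_2^2,$$
while the constraint $\|\bm{Z}\|_{2,0}\le S$ only restricts the support pattern: at most $S$ of the rows may be nonzero. This reduces the projection to a pure support-selection problem over subsets of $\{1,\dots,N\}$ of size at most $S$.

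First, I would fix an arbitrary support $\mathcal{S}\subseteq\{1,\dots,N\}$ with $|\mathcal{S}|\le S$ and minimize over the row values. For $i\in\mathcal{S}$ there is no sparsity constraint on $\bm{z}_i$, so the unconstrained optimum is $\bm{z}_i=\overline{\bm{z}}_i$, contributing $0$ to the objective; for $i\notin\mathcal{S}$ we must take $\bm{z}_i=\bm{0}$, contributing $\|\overline{\bm{z}}_i\|_2^2$. Hence the inner minimum equals $\sum_{i\notin\mathcal{S}}\|\overline{\bm{z}}_i\|_2^2$.

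Second, I would optimize over $\mathcal{S}$. Since $\sum_{i=1}^N\|\overline{\bm{z}}_i\|_2^2$ is a fixed constant, minimizing $\sum_{i\notin\mathcal{S}}\|\overline{\bm{z}}_i\|_2^2$ subject to $|\mathcal{S}|\le S$ is equivalent to maximizing $\sum_{i\in\mathcal{S}}\|\overline{\bm{z}}_i\|_2^2$. Because all summands are nonnegative, it is never worse to enlarge $\mathcal{S}$ up to size $\min(S,\|\overline{\bm{Z}}\|_{2,0})$, and a standard greedy/exchange argument then shows the optimizer is $\mathcal{S}^{\star}=\{(1),\dots,(S)\}$, the indices of the $S$ rows of $\overline{\bm{Z}}$ with largest $\ell_2$ norm. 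Substituting back yields the piecewise formula in the statement; the first branch ($\|\overline{\bm{Z}}\|_{2,0}\le S$) corresponds to the trivial case where the constraint is inactive and $\bm{Z}^{\star}=\overline{\bm{Z}}$ achieves objective $0$.

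The main obstacle is essentially cosmetic rather than mathematical: the minimizer need not be unique when there are ties in the sorted norms, i.e., when $\|\overline{\bm{z}}_{(S)}\|_2=\|\overline{\bm{z}}_{(S+1)}\|_2$. This is why the theorem asserts only that $\bm{Z}^{\star}$ is \emph{one} of the minima; any tie-breaking rule in the sort produces an equally valid optimal solution, and the argument above applies verbatim.
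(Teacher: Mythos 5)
Your proposal is correct and follows essentially the same route as the paper's proof: decompose the Frobenius objective row-wise, observe that each retained row is optimally set to the corresponding row of $\overline{\bm{Z}}$ while zeroed rows each cost $\|\overline{\bm{z}}_i\|_2^2$, and conclude that the optimal support consists of the $S$ rows of largest $\ell_2$ norm. Your explicit two-stage structure (inner minimization over row values for a fixed support, then outer selection of the support) and the remark on non-uniqueness under ties are slightly cleaner than the paper's presentation, but the underlying argument is identical.
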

		\begin{proof}
			Because the case of $\|\overline{\bm{Z}}\|_{2,0}\leq S$ is trivial, we consider the case of $\|\overline{\bm{Z}}\|_{2,0}> S$. To satisfy the inequality constraint $\|\bm{Z}\|_{2,0}\leq S$ in (\ref{Eq:nonconvex_optimization2}), at least $N-S$ subvectors of $\bm{Z}^*$ must be zero vectors from the definition of $\ell_{2,0}$ mixed pseudo-norm. Meanwhile, any change in $\bm{Z}^*$ from $\overline{\bm{Z}}$ increases the value of $\|\bm{Z}-\overline{\bm{Z}}\|_F^2$. From these facts, the $k$-th subvector of $\bm{Z}^*$ of the optimal solution must consist of $\overline{\bm{z}}_{k}$ or $\bm{0}$. Therefore, the cost function is expressed as
			\begin{align}
				\|\bm{Z}-\overline{\bm{Z}}\|_F^2=\sum_{k=1}^N\|\bm{z}_k-\overline{\bm{z}}_{k}\|_2^2.
				\label{Eq:cost_function}
			\end{align}
			If we set $\bm{z}_k = \bm{0}$, the cost is increased by $\|\overline{\bm{z}}_{k}\|_2^2$. Hence, from the fact that $\|\overline{\bm{z}}_{(1)}\|_2\geq\|\overline{\bm{z}}_{(2)}\|_2\geq\cdots\geq\|\overline{\bm{z}}_{(N)}\|_2$, we can conclude that setting $\bm{z}_{(1)}=\overline{\bm{z}}_{(1)},\dots,\bm{z}_{(S)}=\overline{\bm{z}}_{(S)}$ and $\bm{z}_{(S+1)}=\bm{0},\dots,\bm{z}_{(N)}=\bm{0}$ minimizes the cost function (\ref{Eq:cost_function}) subject to the inequality constraint $\|\bm{Z}\|_{2,0}\leq S$.
		\end{proof}
		\par
		Finally, the detailed steps of our algorithm are summarized in Algorithm \ref{Alg:proposed}. In the algorithm, a scalar $\eta$ is set to gradually decrease the value of $\gamma$, which stabilizes the ADMM for non-convex optimization, which is shown in the convergence analysis of the ADMM in the non-convex case \cite{doi:10.1137/140990309}, where the iterations generated by the ADMM under appropriate conditions converge to a stationary point with a sufficiently small $\gamma$. In addition, the solution of the non-convex optimization problem strongly depends on the initial value. Among the computable solutions, the closest to the optimal solution of the non-convex optimization problem is the solution of the convex relaxed optimization problem. Therefore, we recommend using the solution of the convex relaxed problem (\ref{Eq:Graph_blind_deconvolution_3}) as the initial value.
		\begin{algorithm}
			\SetKwInOut{Input}{Input}
			\SetKwInOut{Output}{Output}
			\Input{$\{\hat{\bm{y}}_p\}_{p=1}^P$, $S$, $\bm{Y}_k^{(0)}$,\\
			$\bm{Z}_k^{(0)}$: Arbitrary initial value,\\
			$\rho > 0$, and $0 < \eta < 1$}
			\Output{$\bm{W}^{(n)}$}
			\While{a stopping criterion is not satisfied}{
				$\bm{W}^{(n+1)}=\prox_{\mathcal{I}_{\mathcal{D}}(\cdot)}\left(\frac{1}{2}\sum_{k=1}^2\left(\bm{Z}_k^{(n)}-\bm{Y}_k^{(n)}\right)\right)$\\
				$\bm{Z}_1^{(n+1)}=\prox_{\rho\|\cdot\|_*}\left(\bm{W}^{(n+1)}+\bm{Y}_1^{(n+1)}\right)$\\
				Set $\overline{\bm{Z}}=\bm{W}^{(n+1)}+\bm{Y}_2^{(n+1)}=[\overline{\bm{z}}_1^T,\dots,\overline{\bm{z}}_N^T]^T$\\.
				Compute indices $(1),\dots,(N)$ by sorting $\overline{\bm{z}}_1,\dots,\overline{\bm{z}}_N$ in descending order in terms of their $\ell_2$ norms\\.
				Set $\bm{z}_{(1)}^*=\overline{\bm{z}}_{(1)},\dots,\bm{z}_{(S)}^*=\overline{\bm{z}}_{(S)}$ and $\bm{z}_{(S+1)}^*=\bm{0},\dots,\bm{z}_{(N)}^*=\bm{0}$\\.
				$\bm{Z}_2^{(n+1)}=[{\bm{z}_{1}^*}^T,\dots,{\bm{z}_{N}^*}^T]^T$\\
				\For{$k=1,2$}{
					$\bm{Y}_k^{(n+1)}=\bm{Y}_k^{(n)}+\bm{W}^{(n+1)}-\bm{Z}_k^{(n+1)}$
				}
				$\rho\leftarrow\eta\rho$\\
				$n\leftarrow n+1$
			}
			\caption{Graph Blind Deconvolution with Sparseness Constraint based on ADMM}
			\label{Alg:proposed}
		\end{algorithm}
	% subsection projection_onto_mixed_l_ (end)
% section section_name (end)
\section{Experimental Results} % (fold)
\label{sec:experimental_result}
  	\subsection{Graph filter based diffusion}
  	\label{sub:experiment-A}
  		We validate the performance of the blind deconvolution with the $S$-sparse constraint by solving (\ref{Eq:l0_blind_deconvolution2}) and comparing the result with the recovery result of (\ref{Eq:Graph_blind_deconvolution_3}). We use an undirected random sensor graph with $N = 64$ and a community graph with $N = 100$ \cite{perraudin2014gspbox}. The graph variation operator used is the adjacency matrix of $\mathcal{G}$, that is, $\bm{S} = \bm{A}$. Let $\bm{x}_0$ be the original signal and $\tilde{\bm{x}}$ be the restored signal. The root-mean-square error RMSE$ = \|\tilde{\bm{x}}-\bm{x}_0\|$ is used as an objective measure of the restoration performance.\par
  		Synthetic signals are modeled by $\bm{y} = \bm{Hx}+\bm{\epsilon}$, where $\bm{\epsilon}$ is an additive white Gaussian noise and the filter coefficients with $L = 3$ are set to $\bm{h} = [1.0, 0.8, 0.3]$. The number of nonzero elements in the original signal is $S = 3$. Figs. \ref{Fig:graphs}\subref{Fig:original} and \ref{Fig:community_graphs}\subref{Fig:community_original} show the signal sources and Figs. \ref{Fig:graphs}\subref{Fig:diffusion} and \ref{Fig:community_graphs}\subref{Fig:community_diffusion} are examples of the diffused noisy measurements. We generated 30 synthetic signals with random source locations to conduct the restoration experiment.\par
  		\begin{figure}
			\centering
			\begin{subfigure}[b]{0.45\hsize}
				\centering
				\includegraphics[width=\hsize]{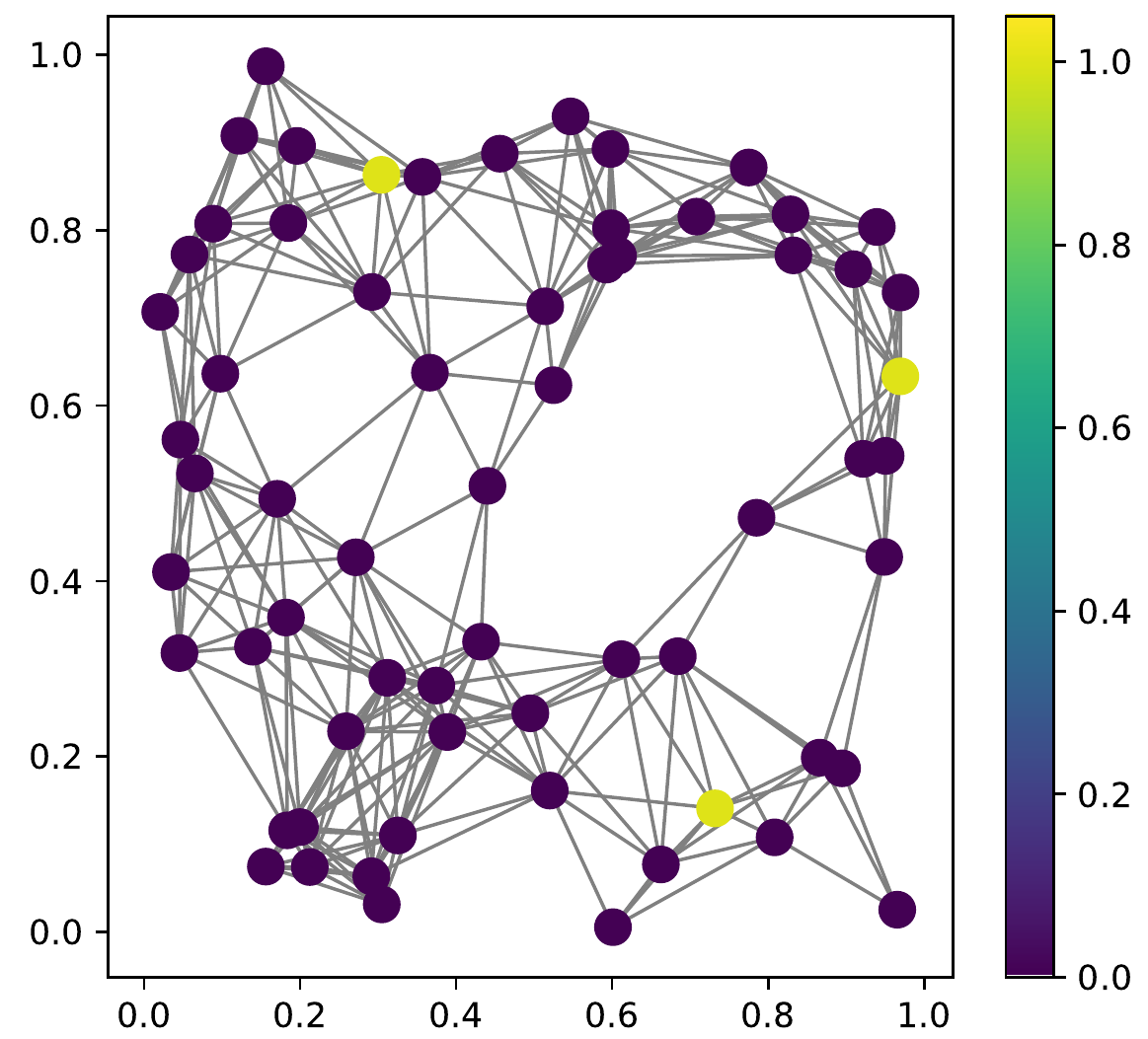}
				\caption{original signal.\protect\linebreak}
				\label{Fig:original}
			\end{subfigure}
			\hfill
%             \hspace{-80pt}
			\begin{subfigure}[b]{0.45\hsize}
				\centering
				\includegraphics[width=\hsize]{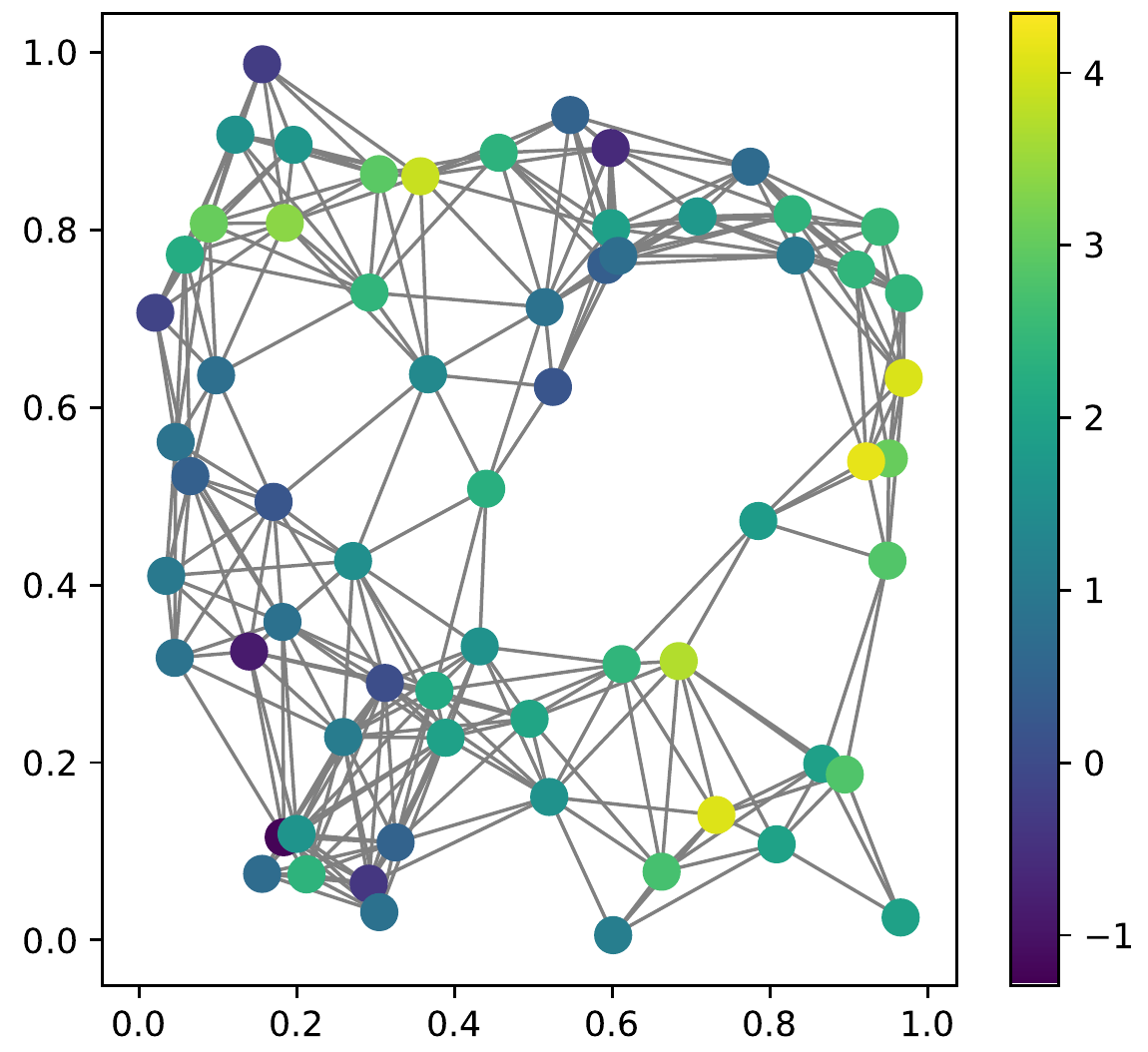}
				\caption{diffused signal (noisy).\protect\linebreak\centering RMSE=$4.1$}
				\label{Fig:diffusion}
			\end{subfigure}
			\vskip\baselineskip
%             \vspace{-10pt}
			\begin{subfigure}[b]{0.45\hsize}
				\centering
				\includegraphics[width=\hsize]{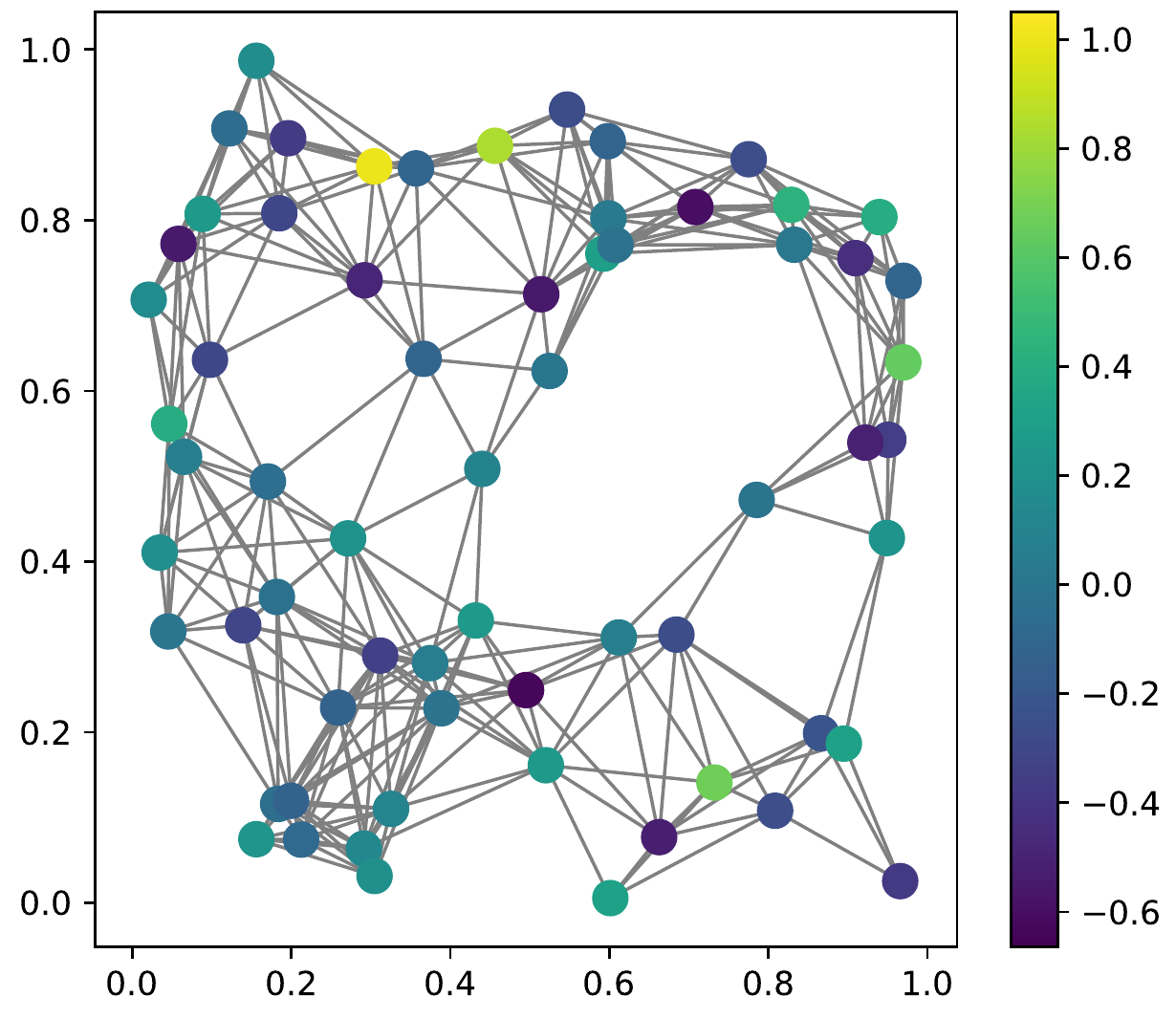}
				\caption{convex relaxation \cite{blind_identification}.\protect\linebreak\centering RMSE=$3.0\times 10^{-1}$}
				\label{Fig:convex}
			\end{subfigure}
			\hfill
% 			\hspace{-80pt}
			\begin{subfigure}[b]{0.45\hsize}
				\centering
				\includegraphics[width=\hsize]{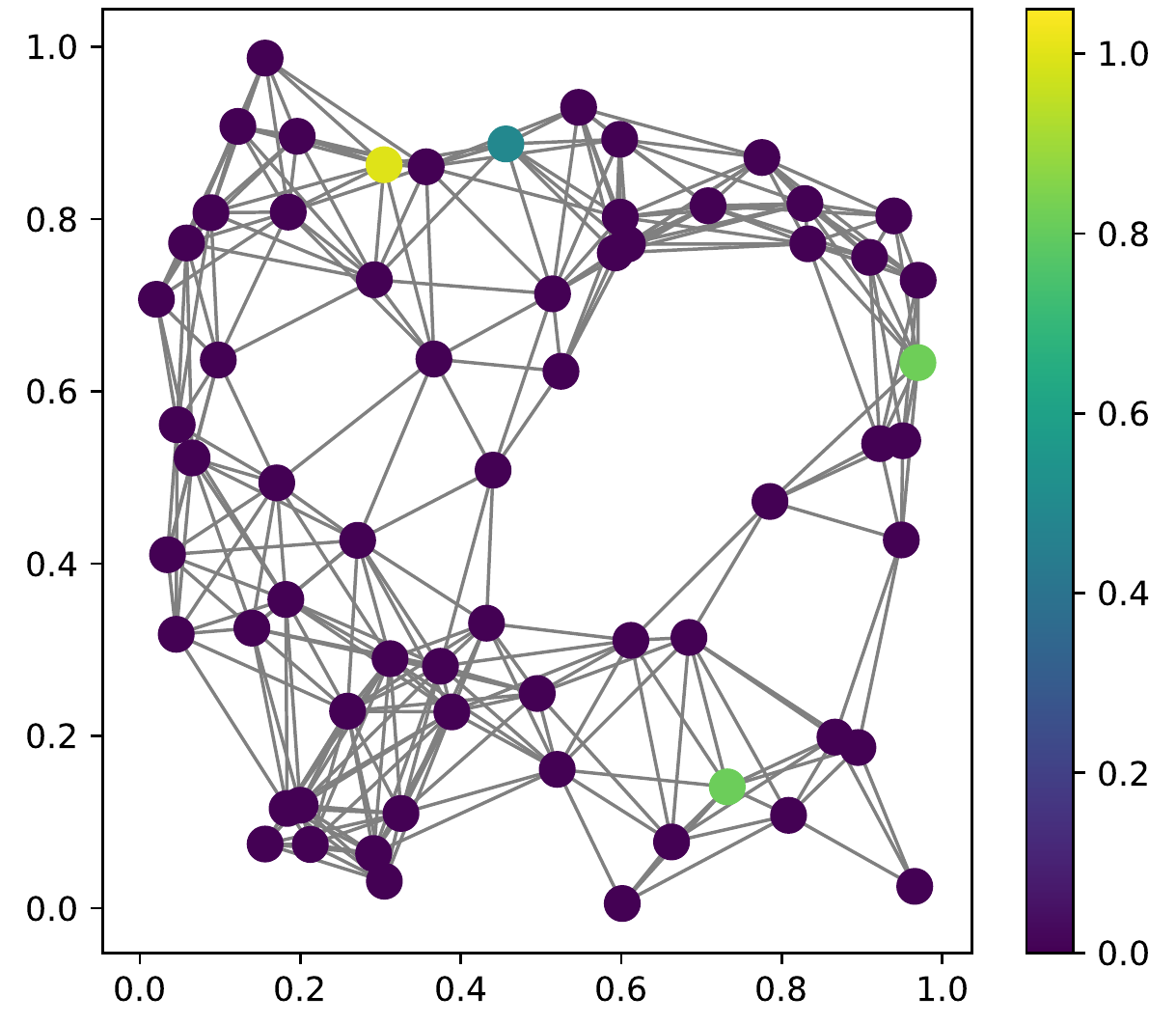}
				\caption{$S$-sparse constraint.\protect\linebreak\centering RMSE=$6.9\times 10^{-2}$}
				\label{Fig:ssparse}
			\end{subfigure}
			\caption{Recovery results on a sensor graph with $N = 64$ and $L = 3$.}
			\label{Fig:graphs}
		\end{figure}
		\begin{table}
			\caption{Average RMSE of restored signals in Experiment \ref{sub:experiment-A}}
			\label{Tab:rmse-A}
			\centering
			\setlength{\tabcolsep}{5pt}
	   % 	\small
			\begin{tabularx}{\hsize}{c|ccc}
				\hline
				\thead{Graph}    & \thead{Diffused signal} & \thead{Convex relaxation} & \thead{Proposed} \\\hline
				% \midrule
				Sensor   & $3.8$ & $3.5\times 10^{-1}$  & $\bm{5.8\times 10^{-2}}$ \\
		        Community & $1.5\times 10^{1}$ & $4.3\times 10^{-1}$  & $\bm{1.4\times 10^{-2}}$ \\
				\hline
			\end{tabularx}	
		\end{table}
		In all recovery experiments, we used an optimal solution of (\ref{Eq:Graph_blind_deconvolution_3}) as the initial value $\bm{Z}_k^{(0)}$ of the minimization problem in (\ref{Eq:admm_step1})--(\ref{Eq:admm_step3}). The RMSE values for the restoration results are shown in Table \ref{Tab:rmse-A}. \par
		Fig. \ref{Fig:graphs}\subref{Fig:convex} shows the result restored by the convex relaxation (\ref{Eq:Graph_blind_deconvolution_3}). The restored signal values have high magnitudes at the sources of the original signal; however, a few samples other than the original sources also have a high magnitude. Therefore, it is difficult to accurately estimate the position of the signal source from the restored signal. Fig. \ref{Fig:graphs}\subref{Fig:ssparse} shows the restored signal by our $S$-sparse constraint. It can be seen that the position of the signal source of the restored signal is clearly the same as those of the original signal. \par
	\subsection{Estimating the case of a mismatched graph filter order} % (fold)
	\label{sub:experiment-B}	
		% subsection estimating_the_case_of_mismatched_graph_filter_order (end)
		\begin{figure}
			\centering
			\begin{subfigure}[b]{0.45\hsize}
				\centering
				\includegraphics[width=\hsize]{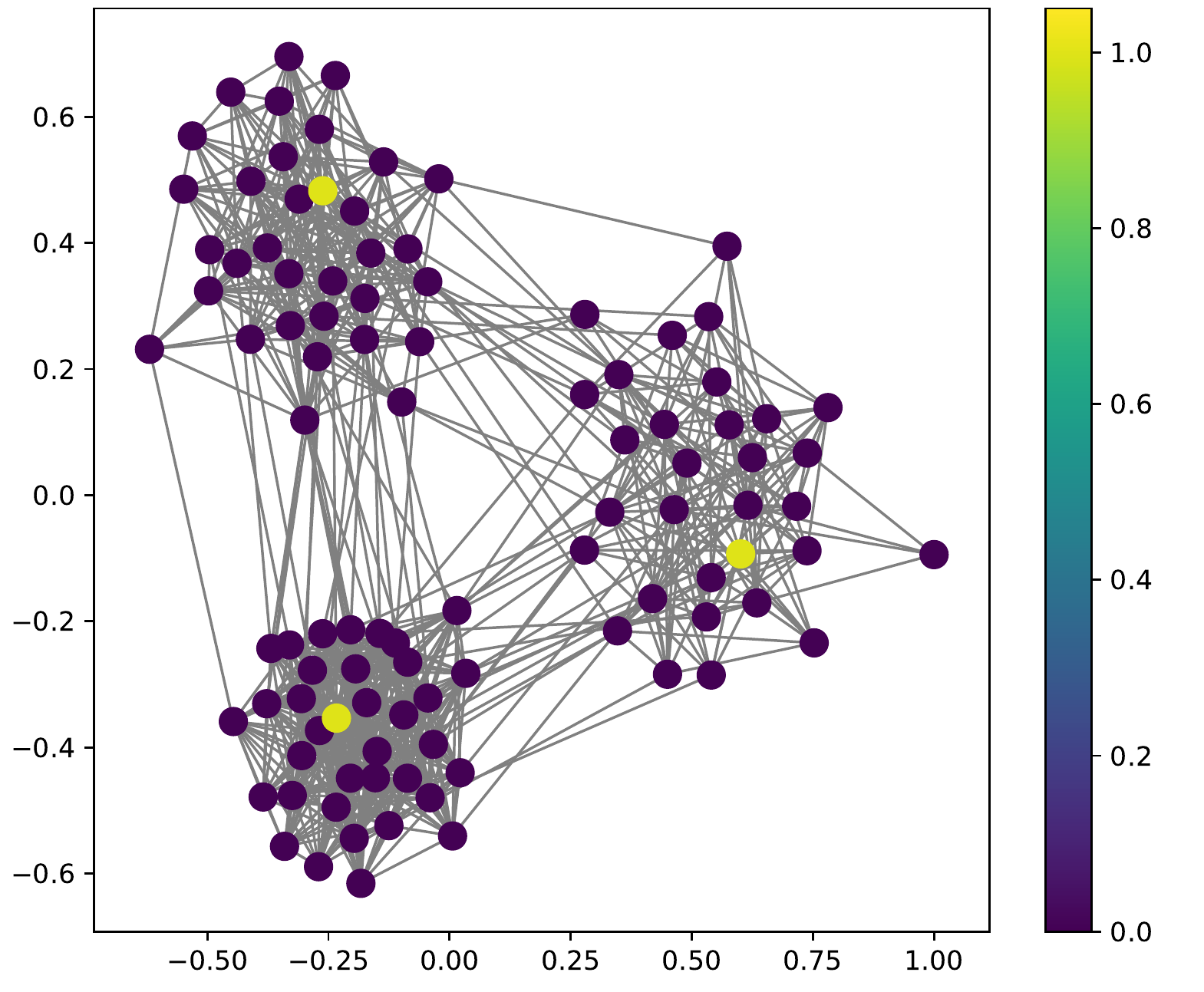}
				\caption{original signal.\protect\linebreak}
				\label{Fig:community_original}
			\end{subfigure}
			\hfill
%            \hspace{-80pt}
			\begin{subfigure}[b]{0.45\hsize}
				\centering
				\includegraphics[width=\hsize]{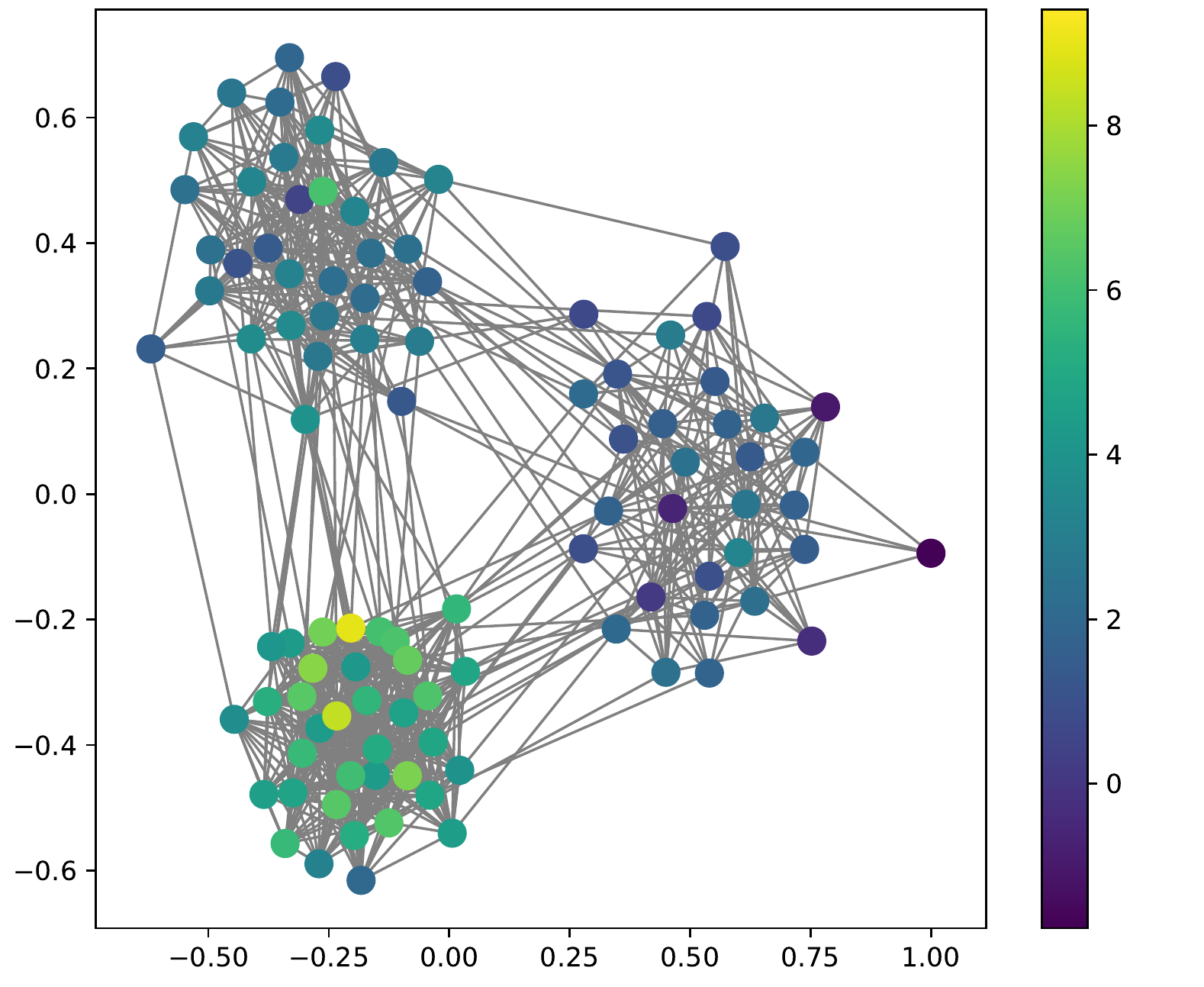}
				\caption{diffused signal (noisy).\protect\linebreak\centering RMSE=$1.4\times 10^{1}$}
				\label{Fig:community_diffusion}
			\end{subfigure}
			\vskip\baselineskip
%             \vspace{-10pt}
			\begin{subfigure}[b]{0.45\hsize}
				\centering
				\includegraphics[width=\hsize]{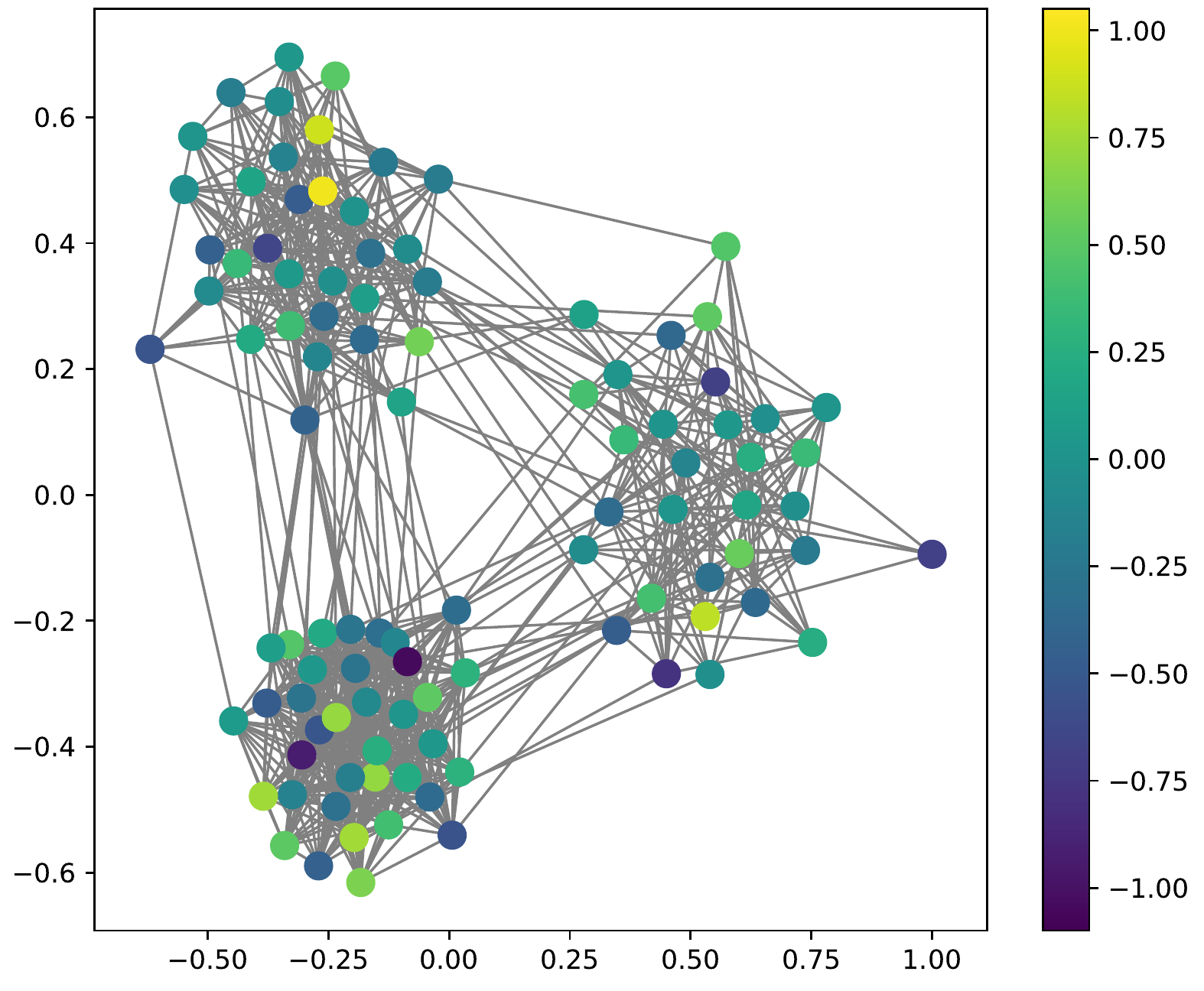}
				\caption{convex relaxation \cite{blind_identification}.\protect\linebreak\centering RMSE=$3.8\times 10^{-1}$}
				\label{Fig:community_convex}
			\end{subfigure}
			\hfill
% 			\hspace{-80pt}
			\begin{subfigure}[b]{0.45\hsize}
				\centering
				\includegraphics[width=\hsize]{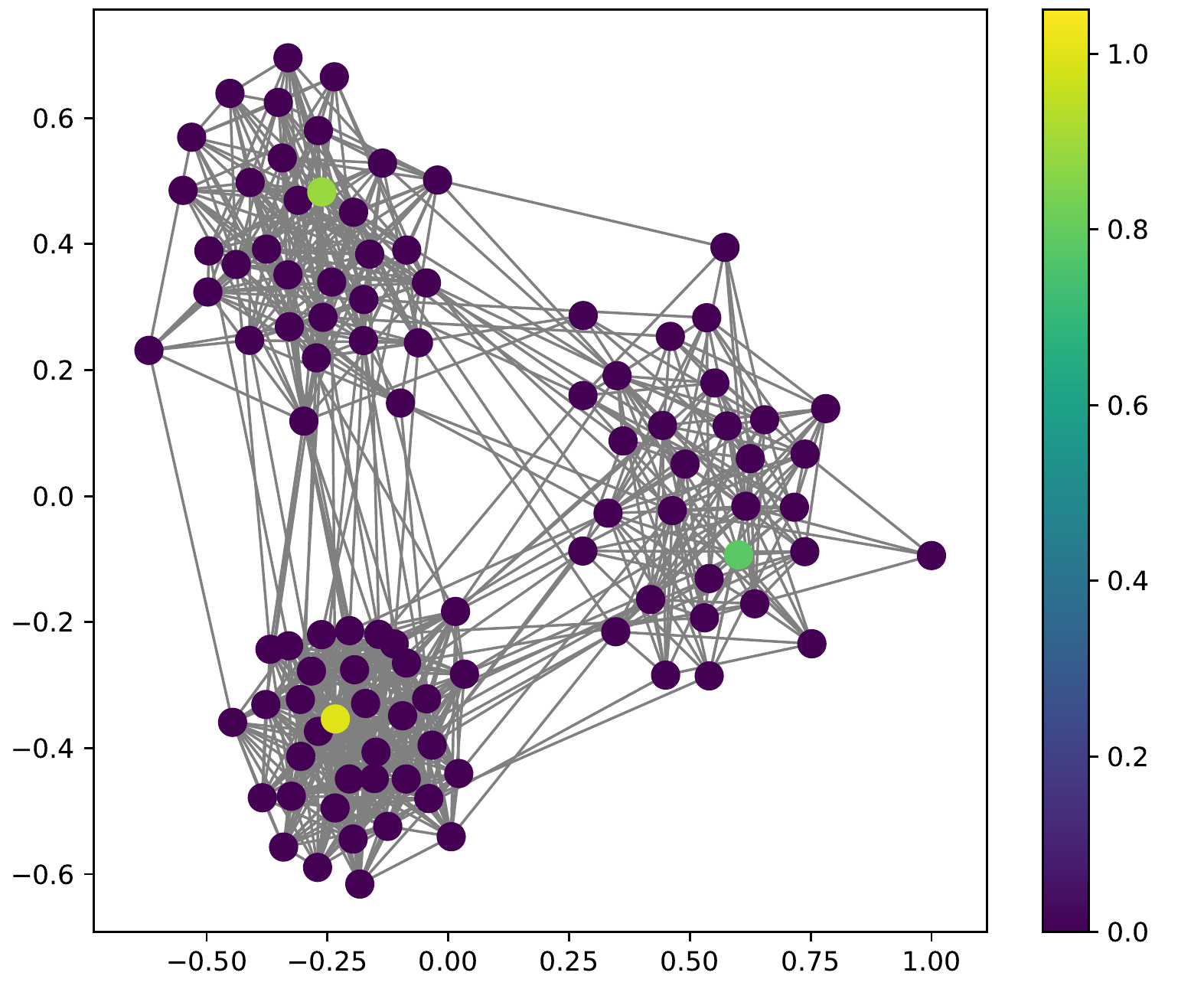}
				\caption{$S$-sparse constraint.\protect\linebreak\centering RMSE=$2.2\times 10^{-2}$}
				\label{Fig:community_ssparse}
			\end{subfigure}
			\caption{Recovery results on a community graph with $N = 100$ and $L = 5$ (mismatched case).}
			\label{Fig:community_graphs}
		\end{figure}
		In many cases, the order of the graph filter is not known a priori when restoring real data. Therefore, in this experiment, we set the order of the graph filter (\ref{Eq:graph_filter}) to $5$, which is larger than the actual filter order $L = 3$.\par
		The RMSEs for the results are summarized in Table \ref{Tab:rmse-B}. As in the previous experiment, the proposed method outperformed the existing method. 
		Fig. \ref{Fig:community_graphs}\subref{Fig:community_convex} and Fig. \ref{Fig:community_graphs}\subref{Fig:community_ssparse} show the results of the convex relaxation (\ref{Eq:Graph_blind_deconvolution_3}) and our $S$-sparse constraint, respectively. Because the order of the estimated filter is different from that of the actual diffusion filter, high magnitude values appear in many places other than the original signal source in Fig. \ref{Fig:community_graphs}\subref{Fig:community_convex}. By contrast, only the signal sources can be accurately estimated by constraining the number of signal sources shown in Fig. \ref{Fig:community_graphs}\subref{Fig:community_ssparse}.
		\begin{table}[]
			\caption{Average RMSE of restored signals in Experiment \ref{sub:experiment-B}}
			\label{Tab:rmse-B}
			\centering
			\setlength{\tabcolsep}{5pt}
	   % 	\small
			\begin{tabularx}{\hsize}{c|ccc}
				\hline
				\thead{Graph}    & \thead{Diffused signal} & \thead{Convex relaxation} & \thead{Proposed} \\
				\hline
				Sensor   & 3.8 & $3.9\times 10^{-1}$  & $\bm{2.1\times 10^{-1}}$ \\
				Community & $1.5\times 10^{1}$ & $4.5\times 10^{-1}$ & $\bm{3.1\times 10^{-2}}$ \\
				\hline
			\end{tabularx}
		\end{table}
	% subsection community_graph (end)
	\subsection{Recovery performance for different initial values} % (fold)
	\label{sub:recovery_performance}
		Finally, we compared the performances between a recovery using random initial values and a recovery using the optimal solutions of the convex relaxed problem (\ref{Eq:Graph_blind_deconvolution_3}). In this experiment, the location of the signal source are determined at random. The filter coefficients of the graph filter are also set at random within the range of $[0, 1]$. The restoration performance is evaluated based on the ratio of the restored signal to the original signal, as given by the following: 
		\begin{align}
			r_{\mathrm{restore}}=\frac{1}{N}\sum_{k=1}^N\frac{S_{k,\mathrm{match}}}{S_k},
		\end{align}
		where $N$ is the number of trials and $S$ and $S_{k,\mathrm{match}}$ represent the number of original sources and the number of matched sources, respectively. Fig. \ref{Fig:recovery_performance} shows the recovery performance for each combination of the number of original sources and the order of the graph filter for signals on the random sensor graph. Because random initial values tend to fall into the local minima, the recovery performance is low even when the number of original sources is small or the order of the filter is low. When the optimal solution of the convex problem is set to the initial value, it is observed that the recovery performance is considerably improved. 
		\begin{figure}
        	\centering
			\begin{subfigure}[b]{0.49\hsize}
				\centering
				\includegraphics[width=\hsize]{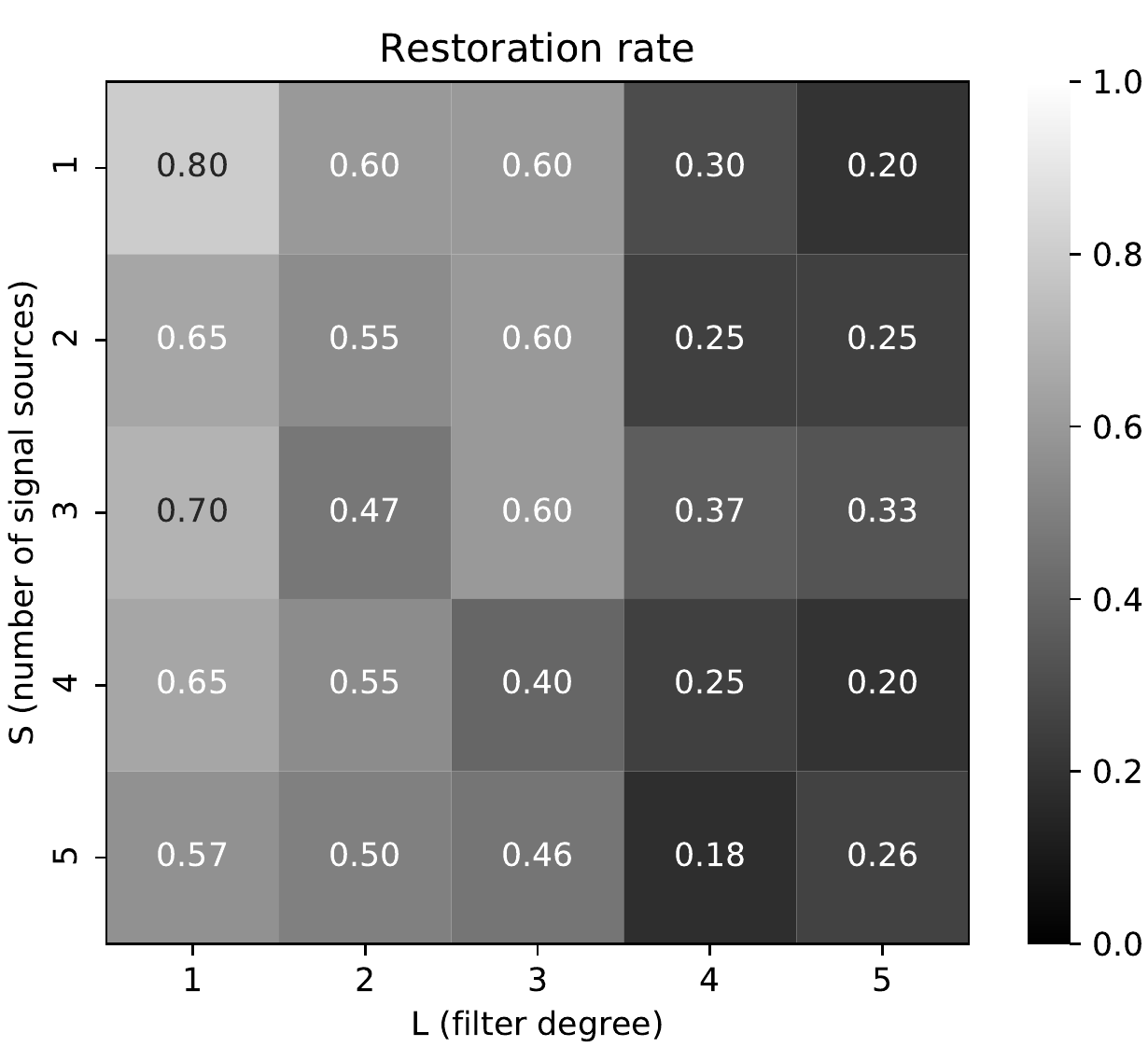}
				\caption{Random initial value.}
				\label{Fig:random}
			\end{subfigure}
			\hfill
% 			\hspace{-60pt}
			\begin{subfigure}[b]{0.49\hsize}
				\centering
				\includegraphics[width=\hsize]{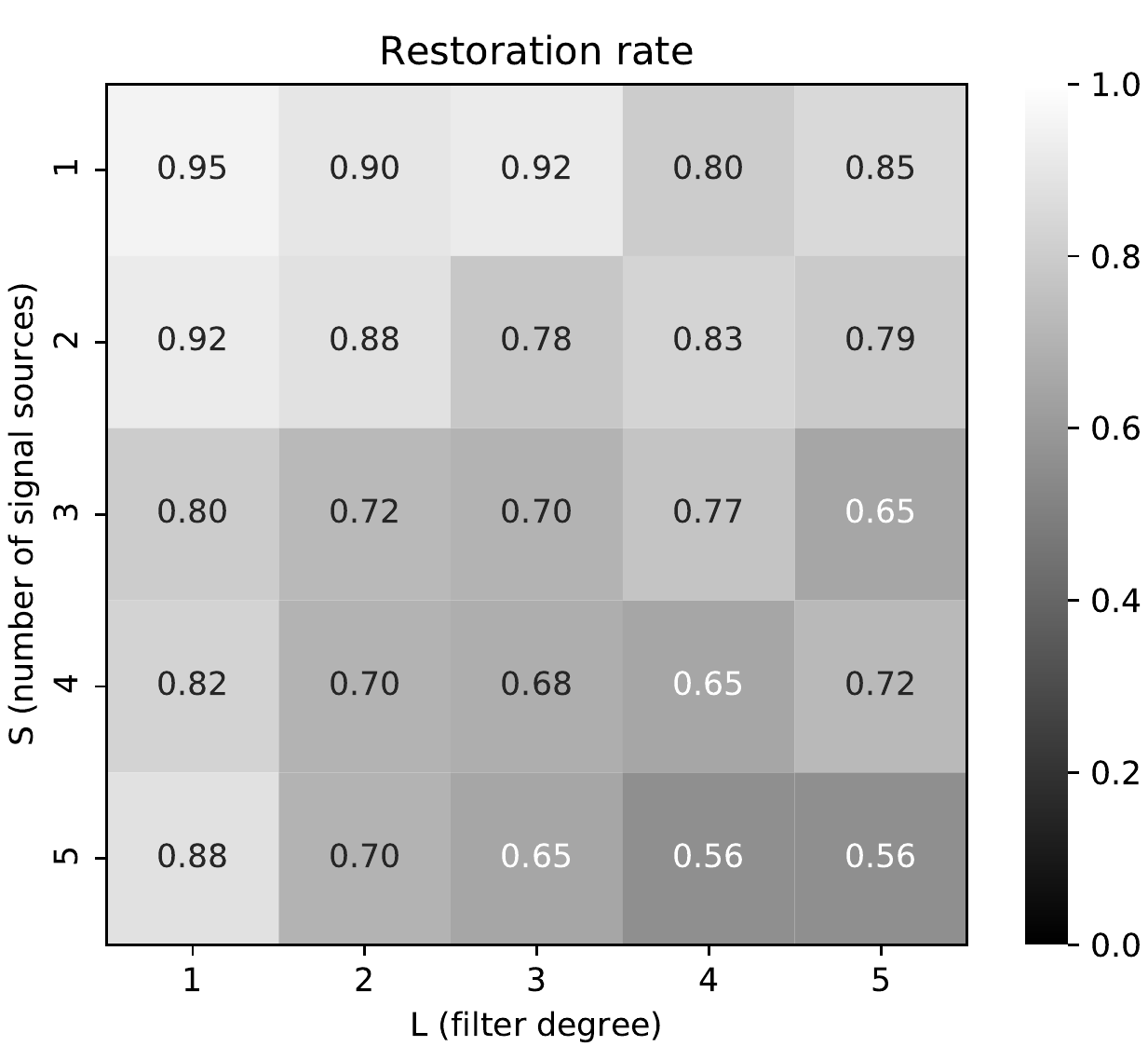}
				\caption{Optimal solution of (\ref{Eq:Graph_blind_deconvolution_3}).}
				\label{Fig:solution_of_convex_relaxed}
			\end{subfigure}
			\caption{Recovery performance matrix according to $S$ and $L$.}
			\label{Fig:recovery_performance}
		\end{figure}
	% subsection recovery_performance (end)
\section{Conclusion}
	We propose a method for identifying the original graph signal from diffused noisy measurements with the exact sparseness constraint. Our ADMM-based algorithm can recover the original signal based on a non-convex optimization problem with the $\ell_0$ constraint. Numerical results demonstrate the superiority of the proposed approach over existing methods. Furthermore, we showed that appropriately setting the initial values improves the restoration performance. 

% Generated by IEEEtran.bst, version: 1.14 (2015/08/26)

\end{document}